\def\BibTeX{{\rm B\kern-.05em{\sc i\kern-.025em b}\kern-.08em
    T\kern-.1667em\lower.7ex\hbox{E}\kern-.125emX}}
\definecolor{bleudefrance}{rgb}{0.19, 0.55, 0.91}
\newtheorem{lemma}{Lemma}
\newtheorem{definition}{Definition}
\newtheorem{theorem}{Theorem}
\newtheorem{example}{Example}
\newcommand{\bl}{n}
\newcommand{\numedits}{t}
\newcommand{\editsize}{k}
\newcommand{\confball}[3]{\mathcal{C}_{#1}^{#2}(#3)}
\newcommand{\resconfball}[3]{\tilde{\mathcal{C}}_{#1}^{#2}(#3)}
\newcommand{\editball}[3]{\mathcal{B}_{#1}^{#2}(#3)}
\newcommand{\length}[1]{{\ell}(#1)}
\newcommand{\encode}{\mathsf{Enc}}
\newcommand{\insdel}{IDS\xspace}
\newcommand{\tksub}{$\numedits$ $\editsize$-substring\xspace}
\newcommand{\defeq}{\ensuremath{\triangleq}}
\newcommand{\bfc}{{\mathbf{ c}}}
\newcommand{\bfp}{{\mathbf{ p}}}
\newcommand{\bfu}{{\mathbf{ u}}}
\newcommand{\bfv}{{\mathbf{ v}}}
\newcommand{\bfx}{{\mathbf{ x}}}
\newcommand{\bfy}{{\mathbf{ y}}}
\newcommand{\cC}{\mathcal{C}}
\newcommand{\cP}{\mathcal{P}}
\newcommand{\cS}{\mathcal{S}}
\DeclareRobustCommand{\pmod}{\@ifstar\@pmods\@@pmod}
\def\@pmods#1{\mkern4mu({\operator@font mod}\mkern 6mu#1)}
\begin{document}
\title{Coding Schemes for Document Exchange \\ under Multiple Substring Edits
\thanks{This work is supported by the German Research Foundation (DFG) under Grant numbers BI~2492/5-1 and WA 3907/12-1.}
}

\author{%
	\IEEEauthorblockN{
		Hrishi Narayanan, 
		Vinayak Ramkumar, Rawad Bitar, Antonia Wachter-Zeh \\
	}
	\IEEEauthorblockA{
	 Technical  University of Munich, Germany \\
	}
	\IEEEauthorblockA{
		\{hrishi.narayanan, vinayak.ramkumar, rawad.bitar, antonia.wachter-zeh\}@tum.de
	}
}


\maketitle

\begin{abstract}
We study the document exchange problem under multiple substring edits. A substring edit in a string $\bfx$ occurs when a substring $\bfu$ of $\bfx$ is replaced by an arbitrary string $\bfv$.
The lengths of $\bfu$ and $\bfv$ are bounded from above by a fixed constant. Let $\mathbf{x}$ and $\mathbf{y}$ be two binary strings that differ by multiple substring edits. 
The aim of document exchange schemes is to construct an encoding of $\mathbf{x}$ 
with small length such that $\mathbf{x}$ can be recovered using $\mathbf{y}$ and the encoding. 
We construct a low-complexity document exchange scheme with encoding length of $4\numedits\log\bl + o(\log\bl)$ bits, where $n$ is the length of the string $\mathbf{x}$. The best known scheme achieves an encoding length of $4\numedits\log\bl + O(\log\log\bl)$ bits, but at a much higher computational complexity.
Then, we investigate the average length of valid encodings for document exchange schemes with uniform strings $\mathbf{x}$ and develop a scheme with an expected encoding length of
$(4\numedits-1)\log\bl + o(\log\bl)$ bits. In this setting, prior works have only constructed schemes for a single substring edit. 

\end{abstract}


\section{Introduction}

The document exchange problem, also known as data synchronization, involves two nodes $A$ and $B$ storing strings $\mathbf{x}$ and $\mathbf{y}$, respectively, where the string $\mathbf{y}$ 
differs slightly from $\mathbf{x}$. The goal is for node $B$ to recover $\mathbf{x}$ by receiving a small amount of information from $A$ and leveraging the knowledge of $\mathbf{y}$. We consider the scenario in which $\mathbf{x}$ and $\mathbf{y}$ are binary strings and node $A$ computes an encoding of the string $\mathbf{x}$, denoted by $\encode(\mathbf{x})$, and communicates it to $B$. The encoding is designed in such a way that node $B$ is able to recover $\mathbf{x}$ from $\mathbf{y}$ and $\encode(\mathbf{x})$. The length of $\encode(\bfx)$ is termed \emph{redundancy}.
Since the goal is for node $B$ to obtain $\mathbf{x}$ with minimum communication between the nodes, the redundancy should be made as small as possible. See Fig.~\ref{fig:dx} for an illustration of the document exchange setting.

\begin{figure}[htbp]  
	\centering
    \includegraphics[width=0.5\columnwidth]{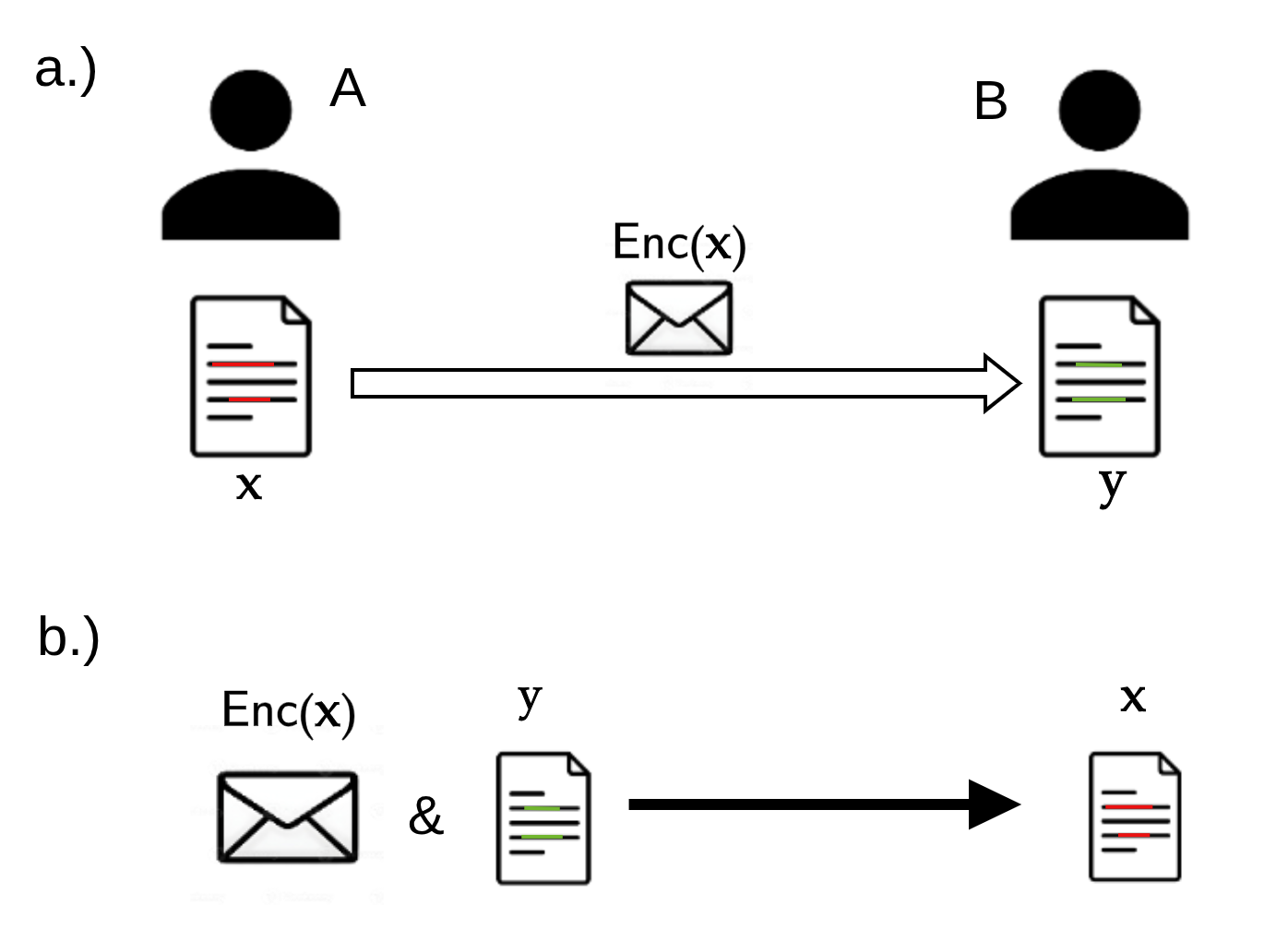}


	\caption{\small{Illustration of the Document Exchange problem.
    a.) two parties A and B have slightly different strings $\mathbf{x}$ and $\mathbf{y}$, respectively. Party A communicates $\encode(\bfx)$ to party B. b.) From $\encode(\bfx)$ and $\mathbf{y}$, party B can retrieve the string $\mathbf{x}$.}
}
	\label{fig:dx}
\end{figure}

The model typically considered in the document exchange literature assumes that $\mathbf{x}$ and $\mathbf{y}$ differ by a number of insertions, deletions, or substitutions. Such operations are called \insdel edits. The minimum number of \insdel edits needed to change $\mathbf{x}$ into $\mathbf{y}$ is bounded from above by a parameter $\tau$. 

Early works on document exchange under this model can be found in \cite{Dx_early_1, Dx_early_2, Dx_early_3, Dx_early_4, Dx_early_5, Ramachandran_docex}. 
For constant $\tau$, the best known  
document exchange scheme for at most $\tau$ \insdel edits has redundancy  $4\tau\log\bl + o(\log\bl)$ bits~\cite{Sima_optimal_t_del_code}, where $n$ denotes the length of $\mathbf{x}$.  It is easy to see that solving the document exchange problem under any edit model is equivalent to constructing systematic binary codes correcting $\tau$ edit errors. 
A systematic binary code capable of correcting $\tau$ \insdel edit errors with redundancy $4\tau\log\bl + o(\log\bl)$ bits is constructed in \cite{Sima_optimal_t_del_code}, which resulted in the above mentioned best known document exchange scheme.  
This construction utilizes a technique called syndrome compression introduced in \cite{Sima_Syndrome_compression}. Codes correcting combinations of insertions, deletions and substitutions are studied extensively in the literature (see, for instance, \cite{Guruswami_Two_del_exist_bound, Andreas_Nikita, Rawad_Serge, Antonia_Eitan, Song_Systematic_mult_del_sub_(precoding), Ye_Two_burst_excactly_b_deletion, new_1,new_2, new_3, new_4, new_5} and the references therein) due to their applications in DNA-based data storage. 
However, many of the known code constructions are non-systematic and cannot be directly used to obtain document exchange schemes. 

The authors of~\cite{Tang_substring} introduced the substring edit model defined as follows. For a given constant $k$, a $k-$substring edit applied to a string $\mathbf{x}$ results in a substring $\mathbf{u}$ of $\mathbf{x}$ being replaced by another string $\mathbf{v}$ such that the lengths of $\mathbf{u}$ and $\mathbf{v}$ are bounded from above by $k$. Substring edits is a general model which covers burst deletions, and localized insertions and deletions within a window. An analysis of real and simulated data~\cite{Tang_substring} has shown that substring edits is a good error model for document exchange.

In this work, we focus on constructing schemes for the document exchange problem with $t$ many $k$-substring edits, where $t$ is a constant. The best known document exchange scheme (in terms of redundancy) for the considered model can be constructed using the systematic codes of the recent work \cite{Li_Distributed_graph_coloring}, correcting $t$ many $k$-substring edits, $t\geq 1$. Those codes are constructed using a technique called distributed graph coloring and have redundancy $4\numedits\log\bl + O(\log\log\bl)$ bits. However, their encoding and decoding complexities are $O(n^{8t})$ and $O(n^{10t})$, respectively. Our goal is to construct schemes with similar redundancy and lower complexities.



In addition to the worst-case document exchange explained above, we consider the average-case document exchange. In the average case, multiple encoding functions with different redundancy can be used to encode $\mathbf{x}$. We consider the case where $\bfx$ is drawn uniformly at random from the set of all binary strings of length $n$. The goal is to reduce the average redundancy of the used encodings. Under a \emph{single} $k$-substring edit, the authors of \cite{Li_Optimal_one_substring_error} construct a scheme with average redundancy $\log\bl + O(\log\log\bl)$ bits. To that end, the authors introduce non-systematic codes correcting a \emph{single} $k$-substring edit error with the smallest known redundancy of $\log\bl + O(\log\log\bl)$~\cite{Li_Optimal_one_substring_error}. However, document exchange schemes for multiple $k$-substring edits in the average case remain unexplored. 


Our contributions can be summarized as follows: \begin{enumerate}[label={\emph{(\Roman*)}}]
    \item Using the syndrome compression technique, we construct a worst-case document exchange scheme with redundancy $4\numedits\log\bl + o(\log\bl)$ bits with encoding and decoding complexities of $O(n^{2t+1})$ and  $O(n^{t+1})$, respectively. 
    
    \item For the average-case document exchange problem,  
    we construct a scheme with average redundancy of $(4\numedits-1)\log\bl + o(\log\bl)$ bits. The main idea is to divide the space of input strings into two parts, which allows for different redundancies. We subsequently demonstrate that the majority of the strings come from the part that requires a small redundancy.
\end{enumerate}

The rest of the paper is organized as follows. 
In \cref{sec:background}, we present notations and introduce the problem setting. Section~\ref{sec: Dx} presents our document exchange scheme, which is based on syndrome compression. In Section~\ref{sec:Avg_Dx}, we present the average-case document exchange scheme. 

\section{Preliminaries} \label{sec:background}

We start by introducing some necessary definitions and notations. We use bold font to denote binary strings, e.g., $\mathbf{x}$. For any binary string $\mathbf{x}$, we represent its length as $\length{\mathbf{x}}$. The entries of a string $\bfx$ of length $n$ are denoted by $x_1,\cdots,x_n$, i.e., we write $\bfx=(x_1,\cdots,x_n)$. A substring $\bfu$ of $\bfx$ of length $n_\bfu \leq n$ is a string of $n_\bfu$ consecutive entries of $\bfx$, e.g., $\bfu = (x_i, \cdots, x_{n_\bfu +i -1})$, $i\in [1,n-n_\bfu+1]$. Here, the index $i$ is called the location of the substring $\bfu$ within $\bfx$. We let $(\mathbf{x}, \mathbf{y}, \cdots, \mathbf{z})$ be the concatenation of the strings $\mathbf{x}, \mathbf{y}, \cdots, \mathbf{z}$. For any $k \in \mathbb{N}$, let strings $0^k = (0,\cdots,0)$ and $1^k = (1,\cdots,1)$, where $\length{0^k} = \length{1^k} = k$. Then we denote by $0^k1^k$ the string $(0^k,1^k)$. Sets are denoted by calligraphic letters, e.g., $\cS$, and their cardinality is denoted by $|\cdot|$. For any $a, b \in \mathbb{Z}, a < b$, the set $\{a, a+1, \cdots, b\}$ is denoted as $[a,b]$. We denote by $\{0,1\}^{\leq n}$ the set of all binary strings of length less than or equal to $n$ and by $\{0,1\}^*$ the set of all binary strings. All logarithms are base $2$.

\subsection{Edit Models}
We now introduce the various edit models required to describe our problem setting and results.

For any string $\mathbf{x} \in \{0,1\}^\bl$, a valid arbitrary location $i\in[1, \bl+1]$ and an arbitrary symbol $\omega \in \{0,1\}$, a single \emph{insertion-deletion-substitution (\insdel) edit} is the operation where either $w$ is inserted in the $i$-th location, $x_i$ is substituted with $\omega$, or $x_i$ is deleted from the string. Such an operation would produce as output either $(x_1,\cdots,x_{i-1},\omega, x_{i}, \cdots, x_\bl)$, $(x_1,\cdots,x_{i-1},\omega, x_{i}, \cdots, x_\bl)$, or $(x_1,\cdots,x_{i-1},y, x_{i}, \cdots, x_\bl)$, respectively. Multiple \insdel edits imply that $\mathbf{x}$ undergoes several subsequent single \insdel edits.

A special case of multiple \insdel edits is substring edits. For any $\mathbf{x} \in \{0,1\}^\bl$, let $\bfu$ be a substring of $\bfx$ with location $i\in [1,n-\length{\bfu}+1]$. A \emph{$\editsize$-substring edit} is the operation where $\mathbf{u}$ is deleted and a string $\mathbf{v}$ is inserted in the same location $i$, and $\length{\mathbf{u}}, \length{\mathbf{v}} \leq \editsize$. Hence, the resulting string would be $((x_1,\cdots,x_{i-1}), \bfv, (x_{i+\length{\bfu}}, \cdots, x_\bl))$.
\begin{example}
    Consider $\mathbf{x} = (1110{\color{red}0010}100110)$ and $\mathbf{y} = (1110{\color{blue}10011}100110)$. From $\mathbf{x}$ one can get $\mathbf{y}$ through a $5$-substring edit with $\mathbf{u} = ({\color{red}0010})$ and $\mathbf{v} = ({\color{blue}10011})$. 
\end{example}

If a $\editsize$-substring edit operation has been applied on $\mathbf{x}$ subsequently over $\numedits$ rounds, the string $\mathbf{x}$ is said to have undergone $\numedits$ $\editsize$-substring edits. We will primarily focus on the document exchange under this edit model, which is also considered in \cite{Li_Distributed_graph_coloring}. Note that $(\numedits-1)$ $\editsize$-substring edit is a special case of $\numedits$ $\editsize$-substring edit, where the deleted and inserted strings are \emph{identical} in at least one round.




We will now introduce some notions connected to application of the edit models on input strings. For any string $\mathbf{x}$, the \emph{edit ball} of $\mathbf{x}$ under $\numedits$ $\editsize$-substring edits $\editball{\numedits}{\mathrm{str}(\editsize)}{\bfx}$  is the set of strings that can be produced from $\mathbf{x}$ by applying at most $\numedits$ $\editsize$-substring edits. For two equal length strings $\mathbf{x}$ and $\mathbf{y}$, we say they are \emph{confusable under $\numedits$ $\editsize$-substring edits} if both strings can produce the same output string under $\numedits$ $\editsize$-substring edits, i.e., $\editball{\numedits}{\mathrm{str}(\editsize)}{\bfx} \cap \editball{\numedits}{\mathrm{str}(\editsize)}{\bfy} \neq \emptyset$. The \emph{confusion ball under $\numedits$ $\editsize$-substring edits} of $\mathbf{x} \in \{0,1\}^\bl$, denoted by $\confball{\numedits}{\mathrm{str}(\editsize)}{\mathbf{x}}$, is the set of strings that are confusable with $\mathbf{x}$ under $\numedits$ $\editsize$-substring edits, namely $\confball{\numedits}{\mathrm{str}(\editsize)}{\mathbf{x}} = \{\bfy \in \{0,1\}^\bl : \editball{\numedits}{\mathrm{str}(\editsize)}{\bfx} \cap \editball{\numedits}{\mathrm{str}(\editsize)}{\bfy} \neq \emptyset\}$.
Further, we define the \emph{edit ball} of $\mathbf{x}$ under $t$ \insdel edits as the set of strings that can be produced from $\mathbf{x}$ by applying $t$ \insdel edits to $\bfx$. This is  denoted as $\editball{t}{\mathrm{\insdel}}{x}$.
For a general statement on the confusion ball which holds true for any choice of $\numedits$ and $\editsize$, we drop the subscripts and superscripts and simply denote the confusion ball as $\confball{}{}{x}$.

\subsection{Problem Statement}
In this paper, for our document exchange schemes, we focus on the multiple-substring edit model, in which strings $\mathbf{x}$ and $\mathbf{y}$ differ by $\numedits$ $\editsize$-substring edits. Such a scenario is formally described below.

\begin{definition}[Document Exchange]
     Consider a string $\mathbf{x} \in \{0,1\}^\bl$ and a string $\mathbf{y}\in\editball{\numedits}{\mathrm{str}(\editsize)}{\mathbf{x}}$. An \emph{encoding function} with redundancy $s$, $\encode:\{0,1\}^\bl \to \{0,1\}^s$, produces a valid \emph{document exchange scheme} if $\bfx$ can be deterministically determined by solely observing $\mathbf{y}$ and $\encode(\mathbf{x})$.
\end{definition}

The complexities of computing the encoding function and determining $\mathbf{x}$ from $\mathbf{y}$ and $\encode(\mathbf{x})$ are referred to as the \emph{encoding} and \emph{decoding} complexities, respectively.

\begin{definition}[Average-case Document Exchange]
     Let the sets $\cP_1,\cdots,\cP_r$ form a partition of $\{0,1\}^n$ and let $\encode_j:\cP_j \to \{0,1\}^{s_j}$, $j\in [1,r]$ be $r$ different encoding functions each with redundancy $s_j$. The $r$ encoding functions produce a valid \emph{average-case document exchange scheme} if any $\bfx \in \{0,1\}^\bl$ can be deterministically determined by solely observing a string $\bfy\in\editball{\numedits}{\mathrm{str}(\editsize)}{\mathbf{x}}$, $r$ and $\encode_r(\bfx)$. Assuming all strings $\bfx$ have the same probability of being chosen, the average redundancy of the scheme is defined as $\bar{s} = \frac{\sum_{j=1}^r |\cP_j|s_j}{n}.$
\end{definition}

By using different encoding functions for different input strings, we may be able to reduce the expected redundancy overall all strings. 
The paper \cite{Li_Optimal_one_substring_error} considers document exchange under one substring edit for cases when $p_0 \neq p_1$, where $p_0$ and $p_1$ are the probabilities corresponding to each bit of $\bfx$ being $0$ or $1$. However, we do not consider such scenarios.



\section{Document exchange under multiple substring edits}
\label{sec: Dx}
In this section, we will present the document exchange scheme under $\numedits$ $\editsize$-substring edits. Using the syndrome compression technique~\cite{Sima_optimal_t_del_code}, we construct an encoding function for a document exchange scheme with redundancy $4\numedits\log\bl + o(\log\bl)$ bits. The encoding and decoding complexities of our scheme are $O(\bl^{2\numedits+1})$ and $O(\bl^{\numedits+1})$, respectively.

Syndrome compression is a general technique for constructing low-redundancy error correcting codes 
as described formally below.

\begin{lemma}[{\protect\cite[Section~2]{Sima_Syndrome_compression}}]
\label{lem:labeling function}
    Consider an arbitrary error model and for $\bfc\in \{0,1\}^n$, let $\confball{}{}{\mathbf{c}}$ represent the confusion ball of $\mathbf{c}$ under the edit model in consideration.
    
    Define a function $f:\{0,1\}^\bl \to \{0,1\}^R$ satisfying
    \begin{enumerate}
        \item $f(\mathbf{c}) \neq f(\mathbf{c}')$, for any $\mathbf{c}, \mathbf{c}' \in \{0,1\}^\bl$ and  $\mathbf{c}' \in \confball{}{}{\mathbf{c}}$, and,
        \item $R \leq O((\log\log\bl)\log\bl)$.
    \end{enumerate}
    We refer to such a function $f$ as a \emph{labeling function}.
    Then, for any fixed $\mathbf{x} \in \{0,1\}^\bl$, there exists an integer $$a_\mathbf{x}\leq2^{\log{|\confball{}{}{\mathbf{x}}|}+o(\log\bl)}$$ such that $$ f(\mathbf{x}) \not\equiv f(\mathbf{y}) \pmod{a_\mathbf{x}}, \quad \forall \ \mathbf{y} \in \confball{}{}{\mathbf{x}}.$$
    The time complexity for finding $a_\bfx$ corresponding to a given $\mathbf{x}$ is at most $O(|\confball{}{}{\mathbf{x}}|^2)$.
\end{lemma}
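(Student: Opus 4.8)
The plan is to show that a suitable modulus $a_\bfx$ exists via a counting/pigeonhole argument over candidate moduli, exactly in the spirit of the syndrome compression technique of \cite{Sima_Syndrome_compression}. Fix $\bfx \in \{0,1\}^\bl$. For each $\bfy \in \confball{}{}{\bfx} \setminus \{\bfx\}$, the nonzero integer $d_\bfy \defeq |f(\bfx) - f(\bfy)|$ (reading the $R$-bit strings $f(\bfx), f(\bfy)$ as integers in $[0, 2^R)$) is well defined and nonzero by property~(1), so $d_\bfy \leq 2^R - 1 < 2^R$. A prime $p$ fails to separate $\bfx$ from $\bfy$ modulo $p$ precisely when $p \mid d_\bfy$; since $d_\bfy < 2^R$, it has fewer than $R$ distinct prime factors. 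Hence the total number of (prime, $\bfy$) "bad" incidences is at most $R \cdot (|\confball{}{}{\bfx}| - 1) < R\,|\confball{}{}{\bfx}|$. Therefore, among any collection of $R\,|\confball{}{}{\bfx}|$ distinct primes, at least one prime $p^\star$ is not bad for any $\bfy$, i.e.\ $f(\bfx) \not\equiv f(\bfy) \pmod{p^\star}$ for all $\bfy \in \confball{}{}{\bfx}$.

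Next I would bound the size of that prime. By the prime number theorem (or the explicit Chebyshev-type bound that the $m$-th prime is $O(m \log m)$), the first $M \defeq R\,|\confball{}{}{\bfx}|$ primes are all at most
\begin{equation*}
p^\star \;\le\; O\!\bigl(M \log M\bigr) \;=\; O\!\bigl(R\,|\confball{}{}{\bfx}|\,\log(R\,|\confball{}{}{\bfx}|)\bigr).
\end{equation*}
Since $R \le O((\log\log\bl)\log\bl)$ by property~(2), and $|\confball{}{}{\bfx}| \le 2^{O(\log\bl)}$ (the confusion ball sits inside $\{0,1\}^\bl$, so trivially $\log|\confball{}{}{\bfx}| \le \bl$, but more usefully for the target bound one uses $\log|\confball{}{}{\bfx}| = O(\log \bl)$ in the regimes of interest), the logarithmic factors and the factor $R$ are all $2^{o(\log \bl)}$. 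Setting $a_\bfx = p^\star$ gives
\begin{equation*}
a_\bfx \;\le\; 2^{\log|\confball{}{}{\bfx}| + o(\log\bl)},
\end{equation*}
as claimed. (If one prefers $a_\bfx$ not necessarily prime, the same pigeonhole works over the integers $1,\dots,M'$ with $M'$ chosen so that fewer than $M'$ of them divide some $d_\bfy$ — counting divisors rather than prime factors — but the prime version gives the cleanest constant.)

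Finally, for the complexity claim: one enumerates candidate moduli in increasing order and, for each, checks whether it separates $f(\bfx)$ from $f(\bfy)$ for every $\bfy \in \confball{}{}{\bfx}$; the argument above guarantees success within $O(M) = O(R\,|\confball{}{}{\bfx}|)$ candidates, and each candidate requires $O(|\confball{}{}{\bfx}|)$ residue comparisons (each on $O(R)$-bit numbers, a lower-order factor), for a total of $O(R\,|\confball{}{}{\bfx}|^2) = O(|\confball{}{}{\bfx}|^2)$ up to the $2^{o(\log\bl)}$ factor $R$, matching the stated bound. The main subtlety to get right is the bookkeeping that ties property~(2) — the bound $R \le O((\log\log\bl)\log\bl)$ — together with the size of the confusion ball so that the product $R \cdot |\confball{}{}{\bfx}| \cdot \log(\cdot)$ only inflates the exponent by $o(\log\bl)$; everything else is a direct pigeonhole over primes plus a standard prime-counting estimate.
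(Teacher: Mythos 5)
The paper itself does not prove this lemma: it is imported as a black box from the syndrome-compression reference \cite{Sima_Syndrome_compression}, and your pigeonhole-over-primes argument is precisely the argument behind that cited result, so in substance your proposal reconstructs the intended proof correctly. Two points deserve care, and you half-flag the first yourself. Bounding the good prime by the first $M = R\,|\confball{}{}{\bfx}|$ primes gives $a_\bfx = O\bigl(R\,|\confball{}{}{\bfx}|\log(R\,|\confball{}{}{\bfx}|)\bigr)$, and the extra factor $\log(R\,|\confball{}{}{\bfx}|)$ is only $2^{o(\log \bl)}$ when $\log|\confball{}{}{\bfx}|$ is itself subpolynomial in $\bl$; the lemma is stated for an arbitrary error model, so as written your bound does not literally give $a_\bfx\le 2^{\log|\confball{}{}{\bfx}|+o(\log\bl)}$ when the confusion ball is exponentially large. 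The standard fix (and what the cited proof effectively does) is to look at primes in a window $[T,2T]$ with $T=\Theta(R\,|\confball{}{}{\bfx}|)$ and count, for each nonzero difference $d_\bfy<2^R$, only its prime factors exceeding $T$ --- there are at most $R/\log T$ of them --- which removes the logarithmic factor and yields $a_\bfx=O(R\,|\confball{}{}{\bfx}|)\le 2^{\log|\confball{}{}{\bfx}|+o(\log\bl)}$ unconditionally. For this paper's application the distinction is harmless, since Lemma~\ref{lem:confusion ball general case} gives $|\confball{\numedits}{\mathrm{str}(\editsize)}{\bfx}|=O(\bl^{2\numedits})$, so $\log|\confball{}{}{\bfx}|=O(\log\bl)$ and your cruder estimate already suffices.

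Second, your search procedure costs $O(R\,|\confball{}{}{\bfx}|^2)$ comparisons, not $O(|\confball{}{}{\bfx}|^2)$; since $R=O((\log\log\bl)\log\bl)$ this is a polylogarithmic slack that does not affect the paper's stated encoding complexity (which is dominated by enumerating the ball and the $O(\bl)$-time evaluations of $f$), but it is not literally the bound claimed in the lemma, so if you want to match it you should either absorb arithmetic on $O(R)$-bit residues into unit cost or quote the complexity with the extra factor. One last cosmetic point: $\bfx\in\confball{}{}{\bfx}$, so the separation property should be read for $\bfy\neq\bfx$, exactly as you do when you restrict to $\confball{}{}{\bfx}\setminus\{\bfx\}$.
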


The syndrome compression technique requires that the labeling function maps to different outputs for all mutually confusable strings. Our edit model here is \tksub edits, which are a special case of multiple \insdel edits. Hence, we choose the labeling function from~\cite{Sima_optimal_t_del_code}, capable of correcting multiple \insdel edits.
\begin{lemma}[{\protect\cite[Section~5]{Sima_optimal_t_del_code}}]
\label{lem:edit code}
    For any $\tau$, constant w.r.t.\ $\bl$, there exists a labeling function $f:\{0,1\}^\bl \to \{0,1\}^R$, where $$R~\leq [(\tau^2+1)(2\tau^2+1)+2\tau^2(\tau-1)]\log\bl + o(\log\bl),$$ such that for any $\mathbf{c}, \mathbf{c}' \in \{0,1\}^\bl$ and  $\mathbf{c}' \in \confball{\tau}{\mathrm{\insdel}}{\mathbf{c}}$, it holds that $f(\mathbf{c})\neq f(\mathbf{c}')$. 
\end{lemma}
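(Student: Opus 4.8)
The statement is quoted verbatim from \cite[Section~5]{Sima_optimal_t_del_code}; absent a wish to simply cite it, here is the route I would take. The plan is to exhibit a \emph{proper coloring} of the confusability graph on $\{0,1\}^\bl$ (whose edges join pairs of strings confusable under $\tau$ \insdel edits) using $2^R$ colors with $R$ polynomial in $\tau$ times $\log\bl$; since $\tau$ is constant this automatically fits the $O((\log\log\bl)\log\bl)$ budget demanded by \cref{lem:labeling function}. First I would reduce confusability to a metric condition: if $\mathbf{c}' \in \confball{\tau}{\mathrm{\insdel}}{\mathbf{c}}$, then $\mathbf{c}$ and $\mathbf{c}'$ both reach a common word after at most $\tau$ \insdel edits, hence are at Levenshtein distance at most $2\tau$; equivalently, some alignment of the two strings agrees on a common subsequence of length at least $\bl - 2\tau$, with all disagreement confined to at most $2\tau$ short ``edited windows.'' It therefore suffices to design $f$ so that $f(\mathbf{c}) = f(\mathbf{c}')$ forces Levenshtein distance at least $2\tau + 1$, i.e.\ $\mathbf{c} = \mathbf{c}'$ among equal-length strings already known to be within distance $2\tau$.

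The construction I would use has two layers, mirroring the \insdel-code machinery of \cite{Sima_optimal_t_del_code, Sima_Syndrome_compression}. A \emph{synchronization layer} stores a coarse sketch locating the edited windows up to bounded ambiguity: partition $\mathbf{c}$ by a regular grid, refined inside long (nearly) periodic stretches where grid boundaries would otherwise shift under edits, and hash the resulting block-boundary pattern, e.g.\ through the vector of occurrence counts of all length-$\Theta(\tau)$ substrings reduced modulo a number polynomial in $\bl$. An \emph{algebraic layer} stores, for each block and each exponent $e \in \{0,1,\dots,\Theta(\tau)\}$, a weighted power-sum syndrome $\sum_j j^{e} c_j \bmod M$ with $M$ polynomial in $\bl$; these are higher-order Varshamov--Tenengolts-type syndromes and, on a block guaranteed to carry at most $\tau$ \insdel edits, they determine the block's content. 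Concatenating the pieces gives the stated redundancy: roughly $(\tau^2+1)$ groups of syndromes (about $\tau$ synchronization-refinement levels times $\tau$ exponents, plus a base group), each costing $(2\tau^2+1)\log\bl + o(\log\bl)$ bits, plus a further $2\tau^2(\tau-1)\log\bl$ bits absorbing cross-block and boundary interactions, for a total $R \le [(\tau^2+1)(2\tau^2+1) + 2\tau^2(\tau-1)]\log\bl + o(\log\bl)$.

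Correctness I would prove by contraposition: if $f(\mathbf{c}) = f(\mathbf{c}')$, equality of the synchronization sketch forces $\mathbf{c}$ and $\mathbf{c}'$ to share the refined block structure, so every disagreement lies in a block matched with its counterpart; equality of the per-block higher-order syndromes then forces, by the VT-type decoding guarantee, that any block carrying at most $\tau$ edits in fact carries none; a counting argument over blocks then caps the total number of edits by $\tau$, contradicting $\mathbf{c} \ne \mathbf{c}'$ with $\mathbf{c}' \in \confball{\tau}{\mathrm{\insdel}}{\mathbf{c}}$. The hard part --- and the reason the bound carries $\tau^2$ and $\tau^3$ terms rather than the $2\tau$ one might naively hope for --- is the synchronization layer: periodic and near-periodic substrings let block boundaries drift under edits, so the partition must be refined and extra redundancy spent to keep the ``which block changed'' bookkeeping consistent and to stop edits straddling boundaries from being double-counted. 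The power-sum syndromes, the choice of moduli, and the final summation of redundancies are all routine once the synchronization layer is nailed down.
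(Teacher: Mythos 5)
The paper does not prove this lemma at all: it is imported verbatim, with attribution, from \cite[Section~5]{Sima_optimal_t_del_code}, so simply citing it --- the first option you mention --- is exactly what the paper does, and for the purposes of this paper that is sufficient. There is therefore no in-paper argument to compare your sketch against; the only question is whether your sketch would stand on its own as a proof, and it would not yet.

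Concretely, the gaps are these. First, the synchronization layer is the actual technical core of the cited construction (locating the edited windows despite boundary drift in periodic or nearly periodic stretches), and you only name it --- ``partition by a regular grid, refined inside periodic stretches, and hash the boundary pattern'' is a statement of intent, not a construction, and the contraposition argument leans entirely on the unproved claim that equal sketches force matched block structures. Second, the redundancy accounting is reverse-engineered from the target constant: the split into $(\tau^2+1)$ groups costing $(2\tau^2+1)\log\bl$ bits each plus $2\tau^2(\tau-1)\log\bl$ bits for ``cross-block interactions'' is not derived from anything you construct, so it carries no evidential weight; any other split of the same product would be equally (un)justified. Third, the claim that per-block weighted power-sum syndromes modulo a polynomially large $M$ ``determine the block's content'' under up to $\tau$ insertions, deletions and substitutions is itself a nontrivial theorem --- higher-order VT-type syndromes alone do not correct multiple \insdel edits without additional structure on the string (this is precisely why \cite{Sima_optimal_t_del_code} needs its density/regularity machinery), and your argument also silently assumes edits never straddle block boundaries and treats substitutions on par with insertions/deletions without comment. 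If the goal were genuinely to prove the lemma rather than cite it, each of these pieces would have to be filled in from the cited construction; as written, the sketch reproduces its flavor but not its substance.
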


Further, the complexity of computing the function $f$ is at most $O(\bl)$ \cite{Sima_optimal_t_del_code}.
As $\numedits$ $\editsize$-substring edits are equivalent to at most $2\numedits\editsize$ \insdel edit errors, by appropriately substituting $\tau = 2\numedits\editsize$ in \cref{lem:edit code}, we get the desired labeling function for our \tksub model. 

For any $\bfx \in \{0,1\}^\bl$ and a labeling function $f$, document exchange schemes using the syndrome compression technique typically have encoding functions of the form $\encode(\bfx) = (f(\bfx) \mod {a_\bfx}, ~a_\bfx)$. The redundancy of such an encoding function is given by the maximum value of $a_\bfx = O(|\cC(\bfx)|)$. Thus, finding the redundancy of is related to finding the cardinality of the largest confusion ball, i.e., $\max_{\bfx \in \{0,1\}^n} |\cC(\bfx)|$. For our model, we formally quantify this value in \cref{lem:confusion ball general case}.
\begin{lemma}
\label{lem:confusion ball general case}
    For any string $\mathbf{x} \in \{0,1\}^\bl$ and for any $\numedits,\editsize$ as fixed constants, the following upper bound holds
    \begin{align*}
        |\confball{\numedits}{\mathrm{str}(\editsize)}{\mathbf{x}}| = O(\bl^{2\numedits}).
    \end{align*}  
\end{lemma}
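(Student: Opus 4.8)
The plan is to compare the confusion ball with an ordinary edit ball of twice the radius, and then to bound the size of that edit ball by a direct counting argument over single edits.

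First I would record the elementary fact that a $\editsize$-substring edit is \emph{symmetric}: if $\bfy$ is obtained from a string $\bfw$ by deleting a substring $\bfu$ with $\length{\bfu}\le\editsize$ at location $i$ and inserting a string $\bfv$ with $\length{\bfv}\le\editsize$ in its place, then $\bfw$ is recovered from $\bfy$ by deleting the substring $\bfv$ (which now occupies location $i$ and still has length $\le\editsize$) and inserting $\bfu$ there, which is again a $\editsize$-substring edit. Consequently an edit path of length at most $\numedits$ from a string to another can be run backwards as an edit path of length at most $\numedits$ in the opposite direction, and in particular $\bfw\in\editball{\numedits}{\mathrm{str}(\editsize)}{\bfy}$ if and only if $\bfy\in\editball{\numedits}{\mathrm{str}(\editsize)}{\bfw}$.

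Next I would unfold the definition of the confusion ball. If $\bfy\in\confball{\numedits}{\mathrm{str}(\editsize)}{\bfx}$, there is a common string $\bfz\in\editball{\numedits}{\mathrm{str}(\editsize)}{\bfx}\cap\editball{\numedits}{\mathrm{str}(\editsize)}{\bfy}$; concatenating a length-$\le\numedits$ edit path from $\bfx$ to $\bfz$ with the reversal of a length-$\le\numedits$ edit path from $\bfy$ to $\bfz$ gives a sequence of at most $2\numedits$ $\editsize$-substring edits that takes $\bfx$ to $\bfy$. Hence $\confball{\numedits}{\mathrm{str}(\editsize)}{\bfx}\subseteq \editball{2\numedits}{\mathrm{str}(\editsize)}{\bfx}$, so it suffices to show $|\editball{2\numedits}{\mathrm{str}(\editsize)}{\bfx}|=O(\bl^{2\numedits})$. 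Finally I would bound this edit ball by counting the outcomes of one substring edit and iterating: applied to any string of length $\ell$, a $\editsize$-substring edit is specified by the location $i$ and the length $\length{\bfu}\le\editsize$ of the deleted substring (at most $(\ell+1)(\editsize+1)$ choices) together with the inserted string $\bfv$ with $\length{\bfv}\le\editsize$ (at most $\sum_{j=0}^{\editsize}2^{j}=2^{\editsize+1}-1$ choices), which is $O(\ell)$ choices since $\editsize$ is constant. Every string obtained from $\bfx$ after at most $2\numedits$ such edits has length in $[\bl-2\numedits\editsize,\ \bl+2\numedits\editsize]=\bl\pm O(1)$, so each of the at most $2\numedits$ edits contributes a factor $O(\bl)$, giving $|\editball{2\numedits}{\mathrm{str}(\editsize)}{\bfx}|=O(\bl)^{2\numedits}=O(\bl^{2\numedits})$; intersecting with $\{0,1\}^{\bl}$ only shrinks the set, so the claimed bound follows.

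The conceptual crux is the symmetry observation, which is what lets the confusion ball be replaced by a radius-$2\numedits$ edit ball; everything else is routine. The one point that needs care is the bookkeeping in the counting step — in particular making sure the per-edit count is $O(\ell)$ and not $O(\ell^2)$ (the inserted string $\bfv$ only contributes a constant factor because $\editsize$ is constant, not a factor proportional to the length) and that all intermediate strings stay of length $\Theta(\bl)$ so that the $2\numedits$ factors of $O(\bl)$ multiply to exactly $O(\bl^{2\numedits})$.
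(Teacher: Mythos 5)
Your proposal is correct and follows essentially the same route as the paper's proof: show $\confball{\numedits}{\mathrm{str}(\editsize)}{\bfx}\subseteq\editball{2\numedits}{\mathrm{str}(\editsize)}{\bfx}$ via a common string $\bfz$, then bound the radius-$2\numedits$ edit ball by counting $O(\bl)$ outcomes per edit. The only difference is that you make the reversibility of a $\editsize$-substring edit explicit (the paper uses it implicitly when concluding $\bfx$ and $\bfy$ are at most $2\numedits$ edits apart), which is a welcome clarification rather than a deviation.
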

\begin{proof}
    By definition, for any $ \mathbf{y} \in \confball{\numedits}{\mathrm{str}(\editsize)}{\mathbf{x}}$, it holds that $$\editball{\numedits}{\mathrm{str}(\editsize)}{\mathbf{y}} \cap \editball{\numedits}{\mathrm{str}(\editsize)}{\mathbf{x}}\neq \emptyset.$$ Let $\mathbf{z} \in \editball{\numedits}{\mathrm{str}(\editsize)}{\mathbf{y}} \cap \editball{\numedits}{\mathrm{str}(\editsize)}{\mathbf{x}}$. Since $\mathbf{z}$ is at most \tksub edits away from either of $\bfx$ or $\bfy$, it holds that $\bfx$ and $\bfy$ are at most $2\numedits$ substring edits away from each other. Thus, 
    \begin{align}
    \label{eq: confball_subset}
        \confball{\numedits}{\mathrm{str}(\editsize)}{\mathbf{x}} \subseteq \editball{2\numedits}{\mathrm{str}(k)}{\mathbf{x}}.
    \end{align}
    
    A $\editsize$-substring edit on $\mathbf{x}$ is equivalent to deleting a substring $\mathbf{u}$ from $\mathbf{x}$ and inserting a substring $\mathbf{v}$ in the same location, where $\length{\mathbf{u}}, \length{\mathbf{v}} \leq \editsize$. Thus, as $\length{\mathbf{x}} = n$, the number of strings that are \tksub edits away from $\mathbf{x}$, i.e., $\left|\editball{1}{\mathrm{str}(k)}{\mathbf{x}}\right|$, is at most $\bl\editsize(\sum_{i=0}^\editsize 2^i) \leq \bl\editsize3^\editsize$. Note that the length of the resulting strings is at most $\bl+\editsize$. 
    Similarly, for $2\numedits$ $\editsize$-substring edits, we have
    \begin{equation}\label{eq:ballsize}
        \left|\editball{2\numedits}{\mathrm{str}(k)}{\mathbf{x}}\right| \leq \left(\prod_{i=0}^{2\numedits}(n+i\editsize)\right)\cdot\editsize^{2\numedits} 3^{2\editsize\numedits}.
    \end{equation}
    Therefore, $\left|\editball{2\numedits}{\mathrm{str}(k)}{\mathbf{x}}\right|= O(\bl^{2t})$, and  using Equation (\ref{eq: confball_subset}), 
    \begin{equation*}
        \left|\confball{\numedits}{\mathrm{str}(\editsize)}{\mathbf{x}}\right| = O(\bl^{2t}).\qedhere
    \end{equation*}
\end{proof}

Using the above results, we now present our document exchange scheme under \tksub edits and analyze its redundancy and complexities.
\begin{theorem}[Document Exchange Scheme]
\label{thm: Dx}
    Consider $\mathbf{x} \in \{0,1\}^\bl$ and let $\numedits, \editsize$ be fixed constants and $f$ be the labeling function in Lemma~\ref{lem:edit code}. Then, for $a_\mathbf{x}$ as in Lemma~\ref{lem:labeling function} and any $\bfy\in\editball{\numedits}{\mathrm{str}(\editsize)}{\mathbf{x}}$, the following encoding function
    \begin{align*}
        \encode(\mathbf{x}) = \bar{f}(\mathbf{x}) \defeq (f(\mathbf{x})\mod {a_\mathbf{x}},~a_\mathbf{x})
    \end{align*}
    produces a valid document exchange scheme with redundancy $4\numedits\log\bl + o(\log\bl)$ bits and complexity of encoding and decoding as $O(\bl^{2\numedits+1})$ and $O(\bl^{\numedits+1})$, respectively.  
\end{theorem}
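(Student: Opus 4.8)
The plan is to instantiate the syndrome compression framework of \cref{lem:labeling function}, using the \insdel labeling function of \cref{lem:edit code} as the underlying map $f$. Since a $\editsize$-substring edit is the deletion of a substring of length at most $\editsize$ followed by the insertion of a string of length at most $\editsize$, it is realizable by at most $2\editsize$ \insdel edits, so $\numedits$ $\editsize$-substring edits amount to at most $2\numedits\editsize$ \insdel edits. Hence if $\mathbf{y}\in\confball{\numedits}{\mathrm{str}(\editsize)}{\mathbf{x}}$ and $\mathbf{z}\in\editball{\numedits}{\mathrm{str}(\editsize)}{\mathbf{x}}\cap\editball{\numedits}{\mathrm{str}(\editsize)}{\mathbf{y}}$, then $\mathbf{z}\in\editball{2\numedits\editsize}{\mathrm{\insdel}}{\mathbf{x}}\cap\editball{2\numedits\editsize}{\mathrm{\insdel}}{\mathbf{y}}$, i.e.\ $\mathbf{y}\in\confball{2\numedits\editsize}{\mathrm{\insdel}}{\mathbf{x}}$. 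Thus the $f$ of \cref{lem:edit code} with $\tau=2\numedits\editsize$ separates every pair of strings confusable under $\numedits$ $\editsize$-substring edits, and its length $R=O(\log\bl)$ satisfies the hypothesis $R\le O((\log\log\bl)\log\bl)$ of \cref{lem:labeling function}. Applying \cref{lem:labeling function} to this $f$ with the confusion ball $\confball{\numedits}{\mathrm{str}(\editsize)}{\cdot}$ yields, for every $\mathbf{x}$, an integer $a_\mathbf{x}$ with $f(\mathbf{x})\not\equiv f(\mathbf{y})\pmod{a_\mathbf{x}}$ for all $\mathbf{y}\in\confball{\numedits}{\mathrm{str}(\editsize)}{\mathbf{x}}$ and $a_\mathbf{x}\le 2^{\log|\confball{\numedits}{\mathrm{str}(\editsize)}{\mathbf{x}}|+o(\log\bl)}$.

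Next I would verify correctness through the decoder. A $\editsize$-substring edit is reversed by the $\editsize$-substring edit that swaps the roles of the deleted and inserted substrings, so $\mathbf{z}\in\editball{\numedits}{\mathrm{str}(\editsize)}{\mathbf{y}}$ precisely when $\mathbf{y}\in\editball{\numedits}{\mathrm{str}(\editsize)}{\mathbf{z}}$. On input $\mathbf{y}$, the decoder forms the candidate set $\cS=\{\mathbf{z}\in\{0,1\}^\bl:\mathbf{y}\in\editball{\numedits}{\mathrm{str}(\editsize)}{\mathbf{z}}\}$, which contains $\mathbf{x}$. For any $\mathbf{z},\mathbf{z}'\in\cS$, the set $\editball{\numedits}{\mathrm{str}(\editsize)}{\mathbf{z}}\cap\editball{\numedits}{\mathrm{str}(\editsize)}{\mathbf{z}'}$ contains $\mathbf{y}$, so $\mathbf{z}'\in\confball{\numedits}{\mathrm{str}(\editsize)}{\mathbf{z}}$; in particular $\cS\subseteq\confball{\numedits}{\mathrm{str}(\editsize)}{\mathbf{x}}$. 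By the defining property of $a_\mathbf{x}$, the residue $f(\mathbf{x})\bmod a_\mathbf{x}$ differs from $f(\mathbf{z})\bmod a_\mathbf{x}$ for every $\mathbf{z}\in\cS\setminus\{\mathbf{x}\}$, so the decoder outputs the unique $\mathbf{z}\in\cS$ whose residue matches the received $f(\mathbf{x})\bmod a_\mathbf{x}$, which is $\mathbf{x}$.

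For the redundancy, \cref{lem:confusion ball general case} gives $|\confball{\numedits}{\mathrm{str}(\editsize)}{\mathbf{x}}|=O(\bl^{2\numedits})$, hence $\log a_\mathbf{x}\le 2\numedits\log\bl+o(\log\bl)$ for all $\mathbf{x}$. With $A=\max_{\mathbf{x}}a_\mathbf{x}$, I would transmit $\encode(\mathbf{x})=(f(\mathbf{x})\bmod a_\mathbf{x},\,a_\mathbf{x})$ as two fixed-length fields of $\lceil\log A\rceil$ bits each (the first is valid since $f(\mathbf{x})\bmod a_\mathbf{x}<a_\mathbf{x}\le A$), so that the decoder can parse the two halves; the total redundancy is $2\lceil\log A\rceil\le 4\numedits\log\bl+o(\log\bl)$ bits.

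It remains to bound the complexities, and this is where the main care is needed. For decoding, $\cS\subseteq\editball{\numedits}{\mathrm{str}(\editsize)}{\mathbf{y}}$ has $O(\bl^{\numedits})$ elements by the count in the proof of \cref{lem:confusion ball general case}, is listed in $O(\bl^{\numedits+1})$ time by applying every sequence of at most $\numedits$ $\editsize$-substring edits to $\mathbf{y}$, and each $f(\mathbf{z})$ costs $O(\bl)$ by the remark after \cref{lem:edit code}, giving $O(\bl^{\numedits+1})$. For encoding, $\confball{\numedits}{\mathrm{str}(\editsize)}{\mathbf{x}}$ (or the superset $\editball{2\numedits}{\mathrm{str}(\editsize)}{\mathbf{x}}$) has $O(\bl^{2\numedits})$ elements, listed in $O(\bl^{2\numedits+1})$ time, and evaluating $f$ on each is another $O(\bl^{2\numedits+1})$. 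The obstacle is that computing $a_\mathbf{x}$ from these $O(\bl^{2\numedits})$ values, via the generic bound $O(|\confball{}{}{\mathbf{x}}|^2)$ of \cref{lem:labeling function}, would only be $O(\bl^{4\numedits})$. To match $O(\bl^{2\numedits+1})$, I would instead take $a_\mathbf{x}$ to be the smallest prime not dividing $\prod_{\mathbf{y}}\bigl(f(\mathbf{x})-f(\mathbf{y})\bigr)$ over $\mathbf{y}\in\confball{\numedits}{\mathrm{str}(\editsize)}{\mathbf{x}}\setminus\{\mathbf{x}\}$ (every factor is nonzero by the separation property of $f$): this product has $O\bigl(R\cdot|\confball{}{}{\mathbf{x}}|\bigr)=O(\bl^{2\numedits}\log\bl)$ bits, hence at most that many prime factors, so the sought prime has magnitude $O(\bl^{2\numedits}\,\mathrm{polylog}(\bl))$, i.e.\ $O(\log\bl)$ bits, and is found by reducing the product modulo all candidate primes at once with a remainder tree in $O(\bl^{2\numedits}\,\mathrm{polylog}(\bl))$ time. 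The $f$-evaluations then dominate, so the encoding complexity is $O(\bl^{2\numedits+1})$.
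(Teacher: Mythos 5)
Your proposal is correct and follows the same skeleton as the paper's proof: instantiate the labeling function of Lemma~\ref{lem:edit code} with $\tau=2\numedits\editsize$, invoke Lemma~\ref{lem:labeling function} against the confusion ball $\confball{\numedits}{\mathrm{str}(\editsize)}{\cdot}$, bound its size via Lemma~\ref{lem:confusion ball general case} to get the $4\numedits\log\bl+o(\log\bl)$ redundancy, and decode by exhaustively testing residues over $\editball{\numedits}{\mathrm{str}(\editsize)}{\bfy}$. The one place you genuinely diverge is the encoding complexity: the paper dismisses the cost of finding $a_\bfx$ by asserting (citing the syndrome compression paper) that one can maintain $a_\bfx$ ``in parallel at no extra cost,'' starting from $a_\bfx=2$ and increasing it upon each collision --- an argument that is informal as stated, since enlarging $a_\bfx$ mid-scan would in principle require re-checking earlier elements of the confusion ball. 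You instead sidestep the generic $O(|\confball{}{}{\bfx}|^2)$ bound of Lemma~\ref{lem:labeling function} explicitly, taking $a_\bfx$ to be the smallest prime not dividing $\prod_{\bfy}(f(\bfx)-f(\bfy))$ and locating it with a product/remainder tree in $O(\bl^{2\numedits}\,\mathrm{polylog}(\bl))$ time, which keeps $\log a_\bfx = 2\numedits\log\bl + o(\log\bl)$ and lets the $f$-evaluations dominate at $O(\bl^{2\numedits+1})$. Your version is the more self-contained and rigorous treatment of that step (and also handles the parsing of the two fields of $\encode(\bfx)$, which the paper leaves implicit); the paper's version is shorter because it delegates the construction of $a_\bfx$ entirely to the cited syndrome compression machinery. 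Everything else --- correctness of decoding via reversibility of substring edits, the redundancy accounting, and the $O(\bl^{\numedits+1})$ decoding cost --- matches the paper's argument.
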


\begin{proof}

\emph{Determining $\bfx$ correctly:}  By definition, the set of strings of length $\bl$ that are $\numedits$ $\editsize$-substring edits of $\mathbf{y}$ is a subset of $ \confball{\numedits}{\mathrm{str}(\editsize)}{\mathbf{x}}$. From Lemma \ref{lem:labeling function}, for any string $\mathbf{z} \in \confball{\numedits}{\mathrm{str}(\editsize)}{\mathbf{x}}$, it holds that $f(\mathbf{x}) \not\equiv f(\mathbf{z}) \pmod{a_\mathbf{x}}$. Thus, given $\encode(\mathbf{x}) = (f(\mathbf{x})\mod{a_\mathbf{x}},~a_\mathbf{x})$, we can uniquely identify $\mathbf{x}$ by iteratively computing $f(\hat{\mathbf{x}}) \mod{a_\mathbf{x}}$ for all $\hat{\mathbf{x}} \in \editball{\numedits}{\mathrm{str}(k)}{\bfy}$. 
  
\emph{Decoding complexity:} Given $\bfy$, using an analogous bound to the one in \cref{eq:ballsize} (replacing $2\numedits$ in \cref{eq:ballsize} by $t$), it can be seen that computing the set $\editball{\numedits}{\mathrm{str}(k)}{\bfy}$ requires $O(\bl^{\numedits})$ operations, because there are at most $O(\bl^{\numedits})$ many strings in this set. Further, the complexity of calculating $f(\hat{\mathbf{x}})$ is $O(\bl)$ \cite{Sima_optimal_t_del_code, Song_Systematic_mult_del_sub_(precoding)}. Thus, the overall decoding complexity is at most $O(\bl^{\numedits+1})$. 

\emph{Encoding complexity:}  To compute $\encode(\bfx)$, from Lemma \ref{lem:labeling function},  we know that we need to find an appropriate $a_\bfx$, such that $f(\mathbf{x}) \not\equiv f(\mathbf{y}) \pmod{a_\mathbf{x}}, \quad \forall \ \mathbf{y} \in  \confball{\numedits}{\mathrm{str}(\editsize)}{\mathbf{x}}$. Therefore, we need to compute the set $ \confball{\numedits}{\mathrm{str}(\editsize)}{\mathbf{x}}$, which takes $O(\bl^{2\numedits})$ operations. For each $\bfy \in \confball{\numedits}{\mathrm{str}(\editsize)}{\mathbf{x}}$, we need to calculate $f(\bfy)$, which takes $O(\bl)$ operations. The value of $a_\bfx$ needed for the property $f(\bfx)\not\equiv f(\bfy) \pmod*{a_\bfx}$ to hold can be computed in parallel to computing $f(\bfy)$ at no extra cost, see~\cite{Sima_Syndrome_compression}. The reason is that one can start with $a_\bfx = 2$ and increase it whenever we find a $\bfy$ with $f(\bfy)\equiv f(\bfx) \pmod*{a_\bfx}$. Hence, the overall complexity of calculating $\encode(\mathbf{x})$ is $O(n^{2\numedits+1})$.
  
\emph{Redundancy:} From \cref{lem:labeling function,lem:confusion ball general case}, one needs at most $2t\log\bl + o(\log\bl)$ bits to store $a_\mathbf{x}$ and $f(\bfx) \mod a_\bfx$. Thus, the total length of $\encode(\mathbf{x})$ is $4t\log\bl + o(\log\bl)$ bits.
\end{proof}

In \cite{Li_Distributed_graph_coloring}, for constant  $\numedits$ and $\editsize$, a document exchange scheme for $\numedits$ $\editsize$-substring edits with redundancy $4\numedits\log\bl + O(\log\log\bl)$ bits is constructed using distributed graph coloring. A key advantage of the scheme in~\cite{Li_Distributed_graph_coloring} is that it does not require a prior labeling function. Therefore, it is applicable even when $\editsize$ scales with $\bl$. However, the complexities of encoding and decoding are $O(\bl^{8\numedits})$ and $O(\bl^{10\numedits})$, respectively, which are much higher than those demonstrated by our scheme. Furthermore, both schemes have the same dominant term in the redundancy.

\section{Average-case document exchange under multiple substring edits }
\label{sec:Avg_Dx}
We present our average-case document exchange scheme with an average redundancy of $(4\numedits-1)\log\bl + o(\log\bl)$ bits. 

The main idea behind average-case document exchange is that the redundancy of the $\encode$ function of a worst-case document exchange scheme is determined by the largest confusion ball. However, only a few strings have a large cardinality of the confusion ball. The remaining majority have a small cardinality of the confusion ball and will require encodings with smaller redundancy. 

We consider strings $\bfx$ that are \emph{pattern-dense}, i.e.,  each substring of $\bfx$ of length $\delta$ must contain a fixed string $\bfp$, which we refer to as \emph{pattern}. 
\begin{definition}[Pattern-dense strings]
        For some fixed pattern $\mathbf{p}\in\{0,1\}^{\leq \delta}$, any string $\mathbf{x} \in \{0,1\}^\bl$ is \emph{$(\mathbf{p}, \delta)$-dense} if every substring of $\mathbf{x}$ of length $\delta$ contains the pattern $\bfp$ as a substring.
\end{definition}

The goal is to demonstrate that most of the strings in $\{0,1\}^\bl$ are pattern-dense. We then show that such strings require an encoding with a small redundancy. 


    
   

We now demonstrate that for any fixed pattern $\bfp$ and large enough $\delta$, the number of strings in $\{0,1\}^\bl$ which are not $(\mathbf{p}, \delta)$-dense is vanishingly small, as $\bl \to \infty$.
\begin{lemma}
\label{lem:pattern enrichment}
Let $\bfx$ be sampled uniformly at random from $\{0,1\}^\bl$ and let $\delta = \alpha \log \bl$ for some fixed constant $\alpha$. Then, for any pattern $\bfp$ with length constant with respect to $\bl$,
\begin{align*}
    \Pr(\mathbf{x} \text{ is not } (\mathbf{p}, \delta)\text{-dense}) = O\left(\frac{1}{n^{\frac{\lfloor\alpha/\length{\mathbf{p}}\rfloor}{2^{\length{\mathbf{p}}}}-1}}\right).
\end{align*}
\end{lemma}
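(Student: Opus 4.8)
The plan is to upper-bound the probability that $\bfx$ fails to be $(\bfp,\delta)$-dense by a union bound over the at most $\bl$ windows of length $\delta$, and then to bound the probability that one fixed window of length $\delta$ avoids the pattern $\bfp$. So first I would write $\Pr(\bfx \text{ is not } (\bfp,\delta)\text{-dense}) \le \sum_{i} \Pr(\bfx_{[i,i+\delta-1]} \text{ contains no copy of } \bfp)$, where the sum runs over the $\le \bl$ starting positions, and since the window is a uniformly random string in $\{0,1\}^\delta$, all these probabilities are equal. Thus it suffices to show $\Pr(\text{a uniform string in }\{0,1\}^\delta \text{ avoids } \bfp) = O\!\left(n^{-\lfloor\alpha/\length{\bfp}\rfloor/2^{\length{\bfp}}}\right)$, and multiplying by the factor $\bl = n$ gives the claimed exponent $\lfloor\alpha/\length{\bfp}\rfloor/2^{\length{\bfp}} - 1$.

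The key step is the single-window bound. The clean way to get it: partition the length-$\delta$ window into $m \defeq \lfloor \delta/\length{\bfp}\rfloor = \lfloor \alpha\log\bl/\length{\bfp}\rfloor$ disjoint consecutive blocks, each of length exactly $\length{\bfp}$. If the window avoids $\bfp$, then in particular none of these $m$ disjoint aligned blocks equals $\bfp$. Because the blocks are disjoint, their contents are mutually independent uniform strings in $\{0,1\}^{\length{\bfp}}$, and each one equals $\bfp$ with probability $2^{-\length{\bfp}}$; hence the probability that all $m$ blocks differ from $\bfp$ is exactly $\left(1 - 2^{-\length{\bfp}}\right)^{m}$. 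Using $1 - 2^{-\length{\bfp}} \le e^{-2^{-\length{\bfp}}}$ (or, more crudely, $1-u \le 2^{-u/\ln 2}$ type estimates) and substituting $m = \lfloor \alpha\log\bl/\length{\bfp}\rfloor$ turns this into $e^{-m/2^{\length{\bfp}}} = \bl^{-\Theta(m/2^{\length{\bfp}})}$; being slightly careful with the base of the logarithm and absorbing the constant, this is $O\!\left(n^{-\lfloor\alpha/\length{\bfp}\rfloor/2^{\length{\bfp}}}\right)$. Combining with the union-bound factor $n$ finishes the proof.

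The main obstacle — really the only subtlety — is making sure the single-window estimate has the exact exponent $\lfloor\alpha/\length{\bfp}\rfloor$ rather than something like $\alpha/\length{\bfp} - 1$ or $\alpha/(2\length{\bfp})$; this is why one should use \emph{disjoint} aligned blocks (so independence is exact and no overlap correction is needed) and count exactly $\lfloor\delta/\length{\bfp}\rfloor = \lfloor\alpha\log\bl/\length{\bfp}\rfloor$ of them, noting $\lfloor\alpha\log\bl/\length{\bfp}\rfloor \ge \lfloor\alpha/\length{\bfp}\rfloor \log\bl$ for $\bl$ large (or handling the floor by writing $m \ge (\alpha/\length{\bfp})\log\bl - 1$ and folding the $-1$ into the $O(\cdot)$ constant). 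One should also observe that $\length{\bfp}$, $\alpha$, and hence $m/\log\bl$ are all constants in $\bl$, so the $e^{-c}$ estimates are uniform and the whole thing is genuinely $O(n^{-(\text{const})})$; and the edge effect that the last incomplete window (if $\delta \nmid \bl$) can only be shorter is harmless since a shorter window is even more likely to avoid $\bfp$, or one simply drops it from the union bound. No deeper combinatorics (e.g. correlation polynomials of $\bfp$) is needed because the disjoint-block argument sidesteps the autocorrelation structure of $\bfp$ entirely.
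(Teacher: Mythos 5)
Your proposal is correct and follows essentially the same route as the paper's proof: a union bound over the $\le n$ windows of length $\delta$, followed by restricting to the $\lfloor\delta/\length{\bfp}\rfloor$ disjoint aligned blocks inside a window (whose independence gives the factor $(1-2^{-\length{\bfp}})^{\lfloor\delta/\length{\bfp}\rfloor}$), and then $1-x\le e^{-x}$. Your explicit handling of the floor and of the logarithm base is a minor tightening of details the paper glosses over, but the argument is the same.
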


\begin{proof}
 If $\mathbf{x}$ is not $(\mathbf{p}, \delta)$-dense, it means that there exists at least one index $i$, such that $p\neq(x_j,x_{j+1}, \cdots, x_{j+\length{\mathbf{p}}-1}), \forall j \in [i,i+\delta-1]$. Thus,
 
    \begin{align*}
        \Pr(\mathbf{x} \text{ is not } (\mathbf{p}, \delta)\text{-dense}) &=  \Pr\left(\bigcup_{i=1}^{\bl-\delta+1} \{\bfp\neq(x_j,x_{j+1}, \cdots, x_{j+\length{\mathbf{p}}-1}), \forall j \in [i,i+\delta-1]\}\right)\\
        &\stackrel{(a)}{\leq}  \sum_{i=1}^{\bl-\delta+1} \Pr\left(\mathbf{p}\neq(x_j,x_{j+1}, \cdots, x_{j+\length{\mathbf{p}}-1}), \forall j \in [i,i+\delta-1]\right)\\
        &= (\bl-\delta+1)\cdot\Pr\left(\mathbf{p}\neq(x_j,x_{j+1}, \cdots, x_{j+\length{\mathbf{p}}-1}), \forall j \in [1,\delta]\right)\\
        &=(\bl-\delta+1)\cdot\bigcap_{j=1}^{\delta}\Pr(\mathbf{p}\neq(x_j,x_{j+1}, \cdots, x_{j+\length{\mathbf{p}}-1}))\\
        &\stackrel{(b)}{\leq} (\bl-\delta+1)\cdot\prod_{j=1}^{\lfloor\delta/\length{\mathbf{p}}\rfloor}\Pr(p\neq(x_{(j-1)\cdot\length{\mathbf{p}}+1}, \cdots, x_{j\cdot\length{\mathbf{p}}}))\\
        &=  (\bl-\delta+1)\cdot \prod_{j=1}^{\lfloor\delta/\length{\mathbf{p}}\rfloor}(1-\Pr(p\neq(x_{(j-1)\cdot\length{\mathbf{p}}+1}, \cdots, x_{j\cdot\length{\mathbf{p}}})))\\
        &= (\bl-\delta+1)\cdot \left(1-\frac{1}{2^{\length{\mathbf{p}}}}\right)^{\lfloor\delta/\length{\mathbf{p}}\rfloor}\\
        &\stackrel{(c)}{\leq} (\bl-\delta+1)\cdot \exp\left(\frac{-\lfloor\delta/\length{\mathbf{p}}\rfloor}{2^{\length{\mathbf{p}}}}\right)\\
        &= O\left(\frac{1}{n^{\frac{\lfloor\alpha/\length{\mathbf{p}}\rfloor}{2^{\length{\mathbf{p}}}}-1}}\right).
    \end{align*}
Here, $(a)$ follows from the union bound; $(b)$ holds because the probability of the intersection of all events in a set is upper bounded by the probability of the intersection of any subset of events, which for a subset containing mutually independent events equals the product of their probabilities. Finally, $(c)$ holds as $\forall x \in \mathbb{N}, (1-x)\leq e^{-x}$.
\end{proof}
In \cite{Li_Optimal_one_substring_error}, for some choice of $\bfp$ and $\delta$, Li \emph{et al.} demonstrate a scheme capable of correcting one $\editsize$-substring edit for any $(\mathbf{p}, \delta)$-dense string. A description is provided below.
\begin{lemma}[{\protect\cite[Theorem~19]{Li_Optimal_one_substring_error}}]
\label{lem: substring edit code}
Consider any constant $\editsize$, $\delta = \editsize2^{2\editsize+3}\log\bl$ and $\bfp=0^\editsize1^\editsize$. Then, there exist functions $h$ and $\psi$, such that any $(\mathbf{p}, \delta)$-dense string  $\mathbf{x} \in\{0,1\}^\bl$ can be uniquely decoded from
   $ (\bfy, \mathbf{h(\mathbf{x})}, \mathbf{\psi(\mathbf{x})}),$
for any  $\mathbf{y} \in \editball{1}{\mathrm{str}(\editsize)}{\mathbf{x}}$ and where $\length{\mathbf{h(\mathbf{x})}}\leq \log\bl$ and $\length{\psi(\mathbf{x}}) = o(\log\bl)$. The overall complexity of computing $\mathbf{h(\mathbf{x})}$, $\mathbf{\psi(\mathbf{x})}$ and the decoding process is at most $O(\bl \ \mathrm{ poly }\log \bl)$.
\end{lemma}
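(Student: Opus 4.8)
The plan is to follow the ``localize, then refine'' paradigm of \cite{Li_Optimal_one_substring_error}, exploiting the fact that a $(\bfp,\delta)$-dense string carries, via the frequent occurrences of $\bfp = 0^\editsize1^\editsize$, a set of anchors that survive a single localized edit. First I would fix a canonical block decomposition of $\bfx$. The word $0^\editsize1^\editsize$ has no nonempty border (no proper prefix is also a suffix), hence two of its occurrences in any string are pairwise non-overlapping; together with $(\bfp,\delta)$-density this makes the $\bfp$-occurrences of $\bfx$ a family of disjoint length-$2\editsize$ intervals whose consecutive left endpoints are at distance at most $\delta = \Theta(\log\bl)$. Cutting $\bfx$ at these occurrences yields blocks of length between $2\editsize$ and $\delta+2\editsize$. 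The structural fact that drives everything is that a single $\editsize$-substring edit perturbs only $O_\editsize(1)$ consecutive blocks and then shifts all later blocks rigidly by the net length change $\Delta\in[-\editsize,\editsize]$ (known to the decoder since $\length{\bfy}=\bl+\Delta$): $\bfp$-occurrences strictly left of the edited zone are untouched, those strictly right of it are merely translated by $\Delta$, and only the $O_\editsize(1)$ occurrences meeting an $O(\editsize)$-neighborhood of the edit are created, destroyed, or moved.

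For the localization step I would build a coarse sequence $\bfg(\bfx)$ over a bounded-size alphabet recording this block structure --- e.g.\ the gap lengths between consecutive $\bfp$-occurrences (each in $[1,\delta]$), grouped into $\Theta(\bl/\log\bl)$ ``super-blocks'' of $\Theta(\log\bl)$ bits each. By the structural fact, one $\editsize$-substring edit induces a burst of $O_\editsize(1)$ edits confined to a single window of $\bfg(\bfx)$. I would then let $h(\bfx)$ be a syndrome of $\bfg(\bfx)$ under a code correcting such a bounded-size burst (a Varshamov--Tenengolts-type or Reed--Solomon-type syndrome reduced modulo an integer at most $\bl$, so $\length{h(\bfx)}\le\log\bl$). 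The decoder recovers the $\bfp$-occurrence structure of $\bfy$ by one linear scan, forms the corrupted version of $\bfg(\bfx)$, uses $h(\bfx)$ to correct the burst and thereby locate the $O_\editsize(1)$ affected blocks, which pins the edit inside a window $W$ of length $O(\delta)=O(\log\bl)$; outside $W$ it reconstructs $\bfx$ directly from $\bfy$ by undoing the rigid shift by $\Delta$.

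The refinement step is short. Inside $W$, the string $\bfx|_W$ and the corresponding portion of $\bfy$ differ by a single $\editsize$-substring edit, so $\bfx|_W$ lies in an explicitly enumerable set of at most $O(\length{W}\cdot\editsize\cdot 3^\editsize)=O(\log\bl)$ candidates; taking $\psi(\bfx)$ to be the index of $\bfx|_W$ in a canonical ordering of that set costs $O(\log\log\bl)=o(\log\bl)$ bits, and combining with the information already recovered yields $\bfx$. For complexity, all $\bfp$-occurrences and $\bfg$ are obtained in $O(\bl)$ time, the modular syndrome $h$ in $O(\bl\,\mathrm{poly}\log\bl)$, the burst-correction in $\bfg$ acts on a sequence of length $O(\bl/\log\bl)$ with a constant number of errors (near-linear time), and the work inside $W$ touches $\mathrm{poly}\log\bl$ strings of length $O(\log\bl)$; the total is $O(\bl\,\mathrm{poly}\log\bl)$.

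The main obstacle is the redundancy and robustness bookkeeping in the localization step. One must choose $\bfg$ and its burst-correcting code so that the syndrome genuinely fits in $\le\log\bl$ bits --- this is precisely why $\bfg$ must be coarsened to $\Theta(\bl/\log\bl)$ super-blocks rather than taken blockwise, since the number of $\bfp$-occurrences can be as large as $\Theta(\bl)$ --- and so that the ``burst'' model truly captures every way a single $\editsize$-substring edit can garble the anchor structure: the edit may destroy an anchor lying just inside a block boundary, create a spurious anchor at a junction, or straddle two blocks, so the super-block sequence can change by an $O_\editsize(1)$ mixture of substitutions, insertions, and deletions rather than by clean substitutions. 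Making the block correspondence unambiguous in all these cases, and checking that the resulting candidate set consistent with $\bfy$ is separated by $h$, is where the real work lies; the rest is routine.
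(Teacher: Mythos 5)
First, note that the paper does not prove this statement at all: it is imported verbatim as \cite[Theorem~19]{Li_Optimal_one_substring_error}, so your sketch can only be measured against that external construction, whose ``anchor on occurrences of $0^\editsize 1^\editsize$, localize the edit to an $O(\log\bl)$ window, then refine'' outline your proposal does capture in spirit.

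As a proof, however, there are two genuine gaps. The more serious one is in your refinement step: you define $\psi(\bfx)$ as the index of $\bfx|_W$ in a canonical ordering of the candidate set consistent with $\bfy|_W$, but both the window $W$ and that candidate set are determined by the edit (equivalently by $\bfy$), which the encoder does not know; as written, $\psi$ is not a function of $\bfx$ alone, so the scheme is not well-defined at encoding time. The standard repair --- aligning windows to a fixed grid and letting $\psi(\bfx)$ be an aggregate (e.g., a sum or XOR over all grid blocks) of local syndromes or hashes of a code correcting one $\editsize$-substring edit on length-$O(\log\bl)$ blocks, so that the decoder can subtract the contributions of the blocks it already knows --- is exactly the kind of mechanism the cited theorem supplies and your sketch omits. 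The second gap, which you partly flag yourself, is the $\length{h(\bfx)}\le\log\bl$ budget: correcting an $O_\editsize(1)$ mixture of symbol insertions, deletions, and substitutions in a length-$\Theta(\bl/\log\bl)$ sequence over an alphabet of size $\mathrm{poly}(\bl)$ with a single syndrome of at most $\log\bl$ bits does not follow from invoking a generic VT- or RS-type syndrome ``reduced modulo an integer at most $\bl$''; one must also verify that $h$ separates $\bfx$ from \emph{every} other $(\bfp,\delta)$-dense string whose edit ball contains $\bfy$, not merely that it corrects the particular garbling produced by the true edit. Since establishing these two points with the stated bit budgets is precisely the content of the cited theorem, the proposal in its current form does not constitute a proof of the lemma.
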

For document exchange schemes using syndrome compression or distributed graph coloring, the redundancy depends on the size of the confusion ball. In our scheme, we partition $\{0,1\}^n$ into two parts $\cP_1$ and $\cP_2$ such that all $\bfx \in \cP_1$ are $(\bfp,\delta)$-dense strings and all $\bfx\in \cP_2$ are not. This partitioning allows us to encode with small average redundancy, provided we also convey the partition index of $\bfx$ with $\encode(\bfx)$. As such, we can define a \emph{restricted confusion ball} as follows.


\begin{definition}
    Consider a fixed pattern $\mathbf{p}$ and a parameter $\delta$. Then, for every $(\mathbf{p}, \delta)$-dense string $\mathbf{x} \in \{0,1\}^\bl$, the \emph{restricted confusion ball} is defined as,
    \begin{align*}
       \resconfball{\numedits}{\mathrm{str}(\editsize)}{\bfx} \defeq \{\bfy \in \confball{\numedits}{\mathrm{str}(\editsize)}{\bfx}: \bfy \text{ is } (\mathbf{p}, \delta)\text{-dense}\}.
    \end{align*}
\end{definition}

The idea of partitioning the confusion ball based on prior information about the string $\bfx$ is connected to the so-called \emph{pre-coding technique} described in \cite{Song_Systematic_mult_del_sub_(precoding)}. There, Song \emph{et al.}~\cite{Song_Systematic_mult_del_sub_(precoding)} allow the encoding of strings that are codewords of some selected code. This prior information is used to restrict the confusion ball and obtain a smaller redundancy. However, a coding scheme based on pre-coding does not directly translate to an average-case document exchange scheme, because codewords are sparsely distributed in the ambient space. 

We take inspiration from this technique and consider a family of codes to encode the strings in $\cP_1$. In particular, we consider the family of one $\editsize$-substring edit correcting codes proposed in \cite{Li_Optimal_one_substring_error} and using $\bfp = 0^\editsize1^\editsize$, we show that all strings in $\cP_1$ have a reduced size of the restricted confusion ball.

\begin{lemma}
    \label{lem: reduction of confusable ball size}
    For any constant $\editsize$, $\delta = \editsize2^{2\editsize+3}\log\bl$ and $\bfp=0^\editsize1^\editsize$, the size of the restricted confusion ball of any $(\mathbf{p}, \delta)$-dense string $\mathbf{x} \in \{1,0\}^\bl$ can be reduced to
    \begin{align*}
        |\resconfball{\numedits}{\mathrm{str}(\editsize)}{\mathbf{x}}| = O(\bl^{2\numedits-1}).
    \end{align*}  
    by leveraging the fact that $\bfx$ is $(\bfp,\delta)$-dense and using $\mathbf{h(\mathbf{x})}$ and $\psi(x)$ as described in Lemma \ref{lem: substring edit code}.
\end{lemma}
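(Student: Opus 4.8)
The plan is to read the statement as asserting that, once the decoder has been given $h(\bfx)$ and $\psi(\bfx)$, the \emph{effective} size of $\resconfball{\numedits}{\mathrm{str}(\editsize)}{\bfx}$ drops to $O(\bl^{2\numedits-1})$. Since $h$ and $\psi$ are deterministic functions of a string, a decoder holding $h(\bfx)$ and $\psi(\bfx)$ can immediately discard every candidate $\hat\bfx$ with $h(\hat\bfx)\neq h(\bfx)$ or $\psi(\hat\bfx)\neq\psi(\bfx)$; so the set that actually has to be controlled is $\cS_\bfx\defeq\{\hat\bfx\in\resconfball{\numedits}{\mathrm{str}(\editsize)}{\bfx}:h(\hat\bfx)=h(\bfx)\text{ and }\psi(\hat\bfx)=\psi(\bfx)\}$, and I would show $|\cS_\bfx|=O(\bl^{2\numedits-1})$. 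The parameters $\bfp=0^\editsize1^\editsize$ and $\delta=\editsize 2^{2\editsize+3}\log\bl$ are exactly those for which \cref{lem: substring edit code} supplies such $h$ and $\psi$; write $D$ for the decoder it guarantees, so that $D(\bfz,h(\bfa),\psi(\bfa))=\bfa$ whenever $\bfa$ is $(\bfp,\delta)$-dense and $\bfz\in\editball{1}{\mathrm{str}(\editsize)}{\bfa}$.

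Fix $\hat\bfx\in\cS_\bfx$. By the inclusion established in the proof of \cref{lem:confusion ball general case} (see \cref{eq: confball_subset}), $\resconfball{\numedits}{\mathrm{str}(\editsize)}{\bfx}\subseteq\editball{2\numedits}{\mathrm{str}(\editsize)}{\bfx}$, so $\hat\bfx$ is reachable from $\bfx$ by at most $2\numedits$ $\editsize$-substring edits. Padding the sequence with identity edits — a $\editsize$-substring edit whose deleted and inserted substrings coincide, which is permitted — I may fix a chain $\bfx=\bfv_0\to\bfv_1\to\cdots\to\bfv_{2\numedits}=\hat\bfx$ of exactly $2\numedits$ $\editsize$-substring edits. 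Set $\bfw\defeq\bfv_{2\numedits-1}$. Then $\bfw\in\editball{2\numedits-1}{\mathrm{str}(\editsize)}{\bfx}$ (the first $2\numedits-1$ steps), and since a $\editsize$-substring edit is its own inverse up to exchanging the deleted and inserted substrings, $\bfw\in\editball{1}{\mathrm{str}(\editsize)}{\hat\bfx}$ (the last step, reversed).

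Now comes the crucial step: apply \cref{lem: substring edit code} to the \emph{candidate} $\hat\bfx$ — legitimate because $\hat\bfx$ is $(\bfp,\delta)$-dense — to get $D(\bfw,h(\hat\bfx),\psi(\hat\bfx))=\hat\bfx$, and then use $h(\hat\bfx)=h(\bfx)$ and $\psi(\hat\bfx)=\psi(\bfx)$ (which hold since $\hat\bfx\in\cS_\bfx$) to rewrite this as $D(\bfw,h(\bfx),\psi(\bfx))=\hat\bfx$. Hence $\hat\bfx$ is completely determined by $\bfw$ together with the fixed pair $(h(\bfx),\psi(\bfx))$, so $\cS_\bfx\subseteq\bigl\{\,D(\bfw,h(\bfx),\psi(\bfx))\;:\;\bfw\in\editball{2\numedits-1}{\mathrm{str}(\editsize)}{\bfx}\,\bigr\}$, and therefore $|\cS_\bfx|\leq|\editball{2\numedits-1}{\mathrm{str}(\editsize)}{\bfx}|$. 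Finally I would bound $|\editball{2\numedits-1}{\mathrm{str}(\editsize)}{\bfx}|=O(\bl^{2\numedits-1})$ by the counting in the proof of \cref{lem:confusion ball general case} with $2\numedits$ replaced by $2\numedits-1$ in \cref{eq:ballsize}, which finishes the argument.

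The only genuinely non-routine point is the decision to run the single-substring-edit code on the candidate $\hat\bfx$ rather than on $\bfx$: because the transmitted labels $h(\bfx),\psi(\bfx)$ coincide with $h(\hat\bfx),\psi(\hat\bfx)$ for every candidate the decoder has not already discarded, \cref{lem: substring edit code} upgrades ``$\hat\bfx$ is one $\editsize$-substring edit from $\bfw$'' to ``$\hat\bfx$ is \emph{uniquely recovered} from $\bfw$'', which removes one of the $2\numedits$ free edit locations (a factor $\bl$) and is precisely what improves the $O(\bl^{2\numedits})$ bound of \cref{lem:confusion ball general case} to $O(\bl^{2\numedits-1})$. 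The remaining ingredients — reversibility of a substring edit, padding a chain to length exactly $2\numedits$, and the cardinality estimate for $\editball{2\numedits-1}{\mathrm{str}(\editsize)}{\bfx}$ — are all routine and parallel arguments already in the paper.
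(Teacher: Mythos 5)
Your proposal is correct and takes essentially the same route as the paper's proof: both bound the effective candidate set by $|\editball{2\numedits-1}{\mathrm{str}(\editsize)}{\bfx}| = O(\bl^{2\numedits-1})$, by taking the intermediate string one $\editsize$-substring edit before the candidate in a (padded) $2\numedits$-edit chain and invoking the single-substring-edit decoder of \cref{lem: substring edit code} on that last, reversed edit. Your write-up is in fact more explicit than the paper's terse argument on the one delicate point---that the decoder may be run with the transmitted pair $(\mathbf{h}(\bfx),\psi(\bfx))$ on the candidate $\hat{\bfx}$ because any candidate with mismatching labels is discarded outright---which the paper leaves implicit.
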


\begin{proof}
  The proof is similar to that of Lemma \ref{lem:confusion ball general case}. Since $\mathbf{x}$ is known to be $(\mathbf{p}, \delta)$-dense, in such a case, we need to only consider the restricted confusion ball $\resconfball{\numedits}{\mathrm{str}(\editsize)}{\mathbf{x}}$. Since $\resconfball{\numedits}{\mathrm{str}(\editsize)}{\mathbf{x}} \subset \confball{\numedits}{\mathrm{str}(\editsize)}{\mathbf{x}}$, and, from Lemma \ref{lem:confusion ball general case}, $\confball{\numedits}{\mathrm{str}(\editsize)}{\mathbf{x}} \subseteq \editball{2\numedits}{str}{\mathbf{x}}$, it holds that, $$\resconfball{\numedits}{\mathrm{str}(\editsize)}{\mathbf{x}} \subset  \editball{2\numedits}{\mathrm{str}(\editsize)}{\mathbf{x}}\cap\{\mathbf{y}\in\{0,1\}^\bl : \mathbf{y} \text{ is } (\mathbf{p}, \delta)\text{-dense}\}.$$
  
  Now, from Lemma \ref{lem: substring edit code}, given $\mathbf{h(\mathbf{x})}$ and  $\mathbf{\psi(\mathbf{x})}$, one can uniquely decode $\mathbf{x}$ from any string that is one $\editsize$-substring edit away from $\mathbf{x}$. Thus, the size of the restricted confusion ball of $\mathbf{x}$ is  bounded from above by the number of strings in $\editball{2\numedits-1}{\mathrm{str}(\editsize)}{\mathbf{x}}$ that uniquely decode to an $\bl$-length $(\mathbf{p}, \delta)$-dense string. Hence, using arguments similar to Lemma \ref{lem:confusion ball general case}, we have, \begin{equation*}
      |\resconfball{\numedits}{\mathrm{str}(\editsize)}{\mathbf{x}}| \leq |\editball{2\numedits-1}{\mathrm{str}(\editsize)}{\mathbf{x}}| = O(\bl^{2\numedits-1}).\qedhere
  \end{equation*}
\end{proof}

The above result on restricted confusion ball sizes allows us to reduce the redundancy for most strings. With this at hand, we present our average-case document exchange scheme.
\begin{theorem}[Average-case Document Exchange]
\label{thm: ADx}
    Let any string $\mathbf{x} \in \{0,1\}^\bl$, $\numedits, \editsize$ be fixed constants, and  $f$ be the labeling function in Lemma \ref{lem:edit code}. Then, for $a_\mathbf{x}$ as in Lemma \ref{lem:edit code} and any $\bfy\in\editball{\numedits}{\mathrm{str}(\editsize)}{\mathbf{x}}$, the following encoding produces an average-case document exchange scheme
     \begin{align*}
    \encode(\mathbf{x}) =
    \begin{cases}
        (\mathbf{h(\mathbf{x})}, \mathbf{\psi(\mathbf{x})}, \bar{f}(\mathbf{x})), \text{ if } \bfx \text{ is } (\mathbf{p}, \delta)\text{-dense}\\
        \bar{f}(\mathbf{x}), \text{ otherwise,}
    \end{cases}
    \end{align*} 
    with expected redundancy $(4\numedits-1)\log\bl + o(\log\bl)$ bits, where $\bfp = 0^\editsize1^\editsize$, $\delta = \editsize2^{2\editsize+3}\log\bl$, and $\mathbf{h(\mathbf{x})}, \mathbf{\psi(\mathbf{x})}$ are as described in Lemma \ref{lem: substring edit code}.
\end{theorem}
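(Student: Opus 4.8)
The plan is to verify two things: correctness of decoding in both branches of the encoding, and the claimed bound on the expected redundancy. For correctness, I would split on whether $\bfx$ is $(\bfp,\delta)$-dense. The decoder first receives the partition index (equivalently, it can tell which branch was used from the length/format of $\encode(\bfx)$, or $r=2$ is communicated as in the definition). In the non-dense branch, the encoding is exactly $\bar f(\bfx) = (f(\bfx)\bmod a_\bfx, a_\bfx)$, so decoding proceeds precisely as in the proof of \cref{thm: Dx}: for every $\hat\bfx\in\editball{\numedits}{\mathrm{str}(k)}{\bfy}$ we compute $f(\hat\bfx)\bmod a_\bfx$, and by \cref{lem:labeling function} only $\hat\bfx=\bfx$ matches, since every length-$\bl$ string reachable from $\bfy$ lies in $\confball{\numedits}{\mathrm{str}(\editsize)}{\bfx}$. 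In the dense branch, the decoder additionally has $(\mathbf h(\bfx),\psi(\bfx))$; here I would argue that the set of candidates it must disambiguate among is $\resconfball{\numedits}{\mathrm{str}(\editsize)}{\bfx}$ rather than the full confusion ball, because $\bfx$ is known to be $(\bfp,\delta)$-dense and \cref{lem: substring edit code} lets the decoder, using $(\mathbf h(\bfx),\psi(\bfx))$, collapse each single-$\editsize$-substring-edit neighbourhood back to its dense preimage. So $a_\bfx$ only needs to separate $f(\bfx)$ from $f(\bfy)$ for $\bfy$ in the restricted confusion ball, whose size is $O(\bl^{2\numedits-1})$ by \cref{lem: reduction of confusable ball size}.

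For the redundancy count, I would compute the two branch lengths separately. In the non-dense branch, $|\encode(\bfx)| = 4\numedits\log\bl + o(\log\bl)$ exactly as in \cref{thm: Dx}. In the dense branch, $\bar f(\bfx)$ now costs only $2\cdot\big((2\numedits-1)\log\bl + o(\log\bl)\big) = (4\numedits-2)\log\bl + o(\log\bl)$ bits because $a_\bfx = O(|\resconfball{\numedits}{\mathrm{str}(\editsize)}{\bfx}|) = O(\bl^{2\numedits-1})$ and we store both $a_\bfx$ and $f(\bfx)\bmod a_\bfx$; adding $\length{\mathbf h(\bfx)}\le\log\bl$ and $\length{\psi(\bfx)}=o(\log\bl)$ from \cref{lem: substring edit code} gives dense-branch length $(4\numedits-1)\log\bl + o(\log\bl)$. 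Then the expected redundancy is
\begin{align*}
    \bar s \;=\; \frac{|\cP_1|\,s_1 + |\cP_2|\,s_2}{2^\bl}
    \;\le\; \Pr(\bfx\text{ dense})\big((4\numedits-1)\log\bl+o(\log\bl)\big) + \Pr(\bfx\text{ not dense})\big(4\numedits\log\bl+o(\log\bl)\big).
\end{align*}
By \cref{lem:pattern enrichment} applied with $\bfp=0^\editsize1^\editsize$ (so $\length{\bfp}=2\editsize$) and $\delta=\editsize2^{2\editsize+3}\log\bl$, i.e. $\alpha = \editsize 2^{2\editsize+3}$, the exponent $\frac{\lfloor\alpha/\length{\bfp}\rfloor}{2^{\length{\bfp}}}-1 = \frac{\lfloor 2^{2\editsize+2}\rfloor}{2^{2\editsize}}-1 = 4-1 = 3>0$, so $\Pr(\bfx\text{ not dense}) = O(\bl^{-3}) \to 0$. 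Hence the non-dense term contributes $o(\log\bl)$ and $\bar s = (4\numedits-1)\log\bl + o(\log\bl)$.

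The step I expect to be the main obstacle — and the one I would write most carefully — is the correctness argument in the dense branch: specifically, justifying that the decoder can actually restrict its attention to $\resconfball{\numedits}{\mathrm{str}(\editsize)}{\bfx}$. The subtlety is that $\bfy$ itself need not be $(\bfp,\delta)$-dense even though $\bfx$ is, so the decoder cannot simply discard non-dense candidates from $\editball{\numedits}{\mathrm{str}(k)}{\bfy}$; rather it must use $(\mathbf h(\bfx),\psi(\bfx))$ to "peel off" one substring edit at a time, following the structure already used in the proof of \cref{lem: reduction of confusable ball size} which bounds the restricted ball via $\editball{2\numedits-1}{\mathrm{str}(\editsize)}{\bfx}$. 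I would need to make precise that one $\editsize$-substring-edit-correcting round (\cref{lem: substring edit code}) composed with the $a_\bfx$-based disambiguation over the remaining $O(\bl^{2\numedits-1})$ candidates recovers $\bfx$ uniquely, and that this composed procedure runs within the stated (polynomial) complexity — a routine but slightly delicate bookkeeping argument. Everything else is a direct reuse of \cref{thm: Dx}, \cref{lem: reduction of confusable ball size}, and \cref{lem:pattern enrichment}.
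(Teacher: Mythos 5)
Your proposal is correct and follows essentially the same route as the paper's proof: distinguish the two cases by the structure of $\encode(\bfx)$, use the restricted confusion ball of size $O(\bl^{2\numedits-1})$ (Lemma~\ref{lem: reduction of confusable ball size}) together with $\mathbf{h}(\bfx),\psi(\bfx)$ to get $(4\numedits-1)\log\bl+o(\log\bl)$ bits in the dense branch, fall back to Theorem~\ref{thm: Dx} otherwise, and average using the $O(\bl^{-3})$ bound from Lemma~\ref{lem:pattern enrichment} with $\alpha=\editsize 2^{2\editsize+3}$ and $\length{\bfp}=2\editsize$. The dense-branch decoding subtlety you flag is real but is handled in the paper only implicitly through Lemma~\ref{lem: reduction of confusable ball size}, so your treatment is, if anything, slightly more careful than the paper's.
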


\begin{proof}
Since the structure of $\encode(\mathbf{x})$ is different depending on whether $\mathbf{x}$ is $(\mathbf{p}, \delta)$-dense or not, we can clearly distinguish between the two cases. 

If $\mathbf{x}$ is $(\mathbf{p}, \delta)$-dense, we only need to consider the restricted confusion ball $\resconfball{\numedits}{\mathrm{str}(\editsize)}{\mathbf{x}}$. Given $\mathbf{h(\mathbf{x})}$ and $\mathbf{\psi(\mathbf{x})}$, from Lemma \ref{lem: reduction of confusable ball size}, we have $$|\resconfball{\numedits}{\mathrm{str}(\editsize)}{\mathbf{x}}| = O(\bl^{2\numedits-1}).$$  We use steps similar to Theorem {\ref{thm: Dx}}, and note that since the size of $\encode(\mathbf{x})$ now depends on the size of the restricted confusion ball, the redundancy is at most $(4\numedits-2)\log\bl + o(\log\bl)$ bits. Since the length of $h(\bfx)$ is $\log n$ and the length of $\mathbf{\psi(\mathbf{x})}$ is $o(\log n)$, and they both also need to be communicated, the total redundancy of the encoding becomes at most $(4\numedits-1)\log\bl + o(\log\bl)$ bits.

Alternatively, if $\bfx$ is not $(\mathbf{p}, \delta)$-dense, we employ the procedure described in Theorem \ref{thm: Dx}. Now, from Lemma \ref{lem:pattern enrichment}, substituting $\alpha = k2^{2k+3}$ and $\length{\mathbf{p}} = 2k$, we have, $$\Pr(\mathbf{x} \text{ is not } (\mathbf{p}, \delta)\text{-dense}) = O\left(\frac{1}{n^{3}}\right)$$. Thus, on average, the length of $\encode(\mathbf{x})$ will be at most $(4\numedits-1)\log\bl + o(\log\bl)$ bits.
\end{proof}

    As the redundancy of both syndrome compression and distributed graph coloring methods depends solely on the sizes of the confusion balls, we could construct an average-case document exchange scheme with a similar expected redundancy of $(4\numedits-1)\log\bl+O(\log\log\bl)$ using the distributed graph coloring method instead of syndrome compression. However, the encoding and decoding complexities would be much higher for such a scheme.


\section{Conclusion}
In this work, we constructed a document exchange scheme for the \tksub edit model with redundancy $4\numedits \log\bl + o(\log \bl)$ and encoding and decoding complexities of $O(n^{2\numedits+1})$ and $O(n^{\numedits+1})$, respectively. Further, we addressed the previously unexamined case of \tksub edits in the average-case document exchange and proposed a scheme with average redundancy $(4\numedits -1)\log\bl + o(\log \bl)$ bits. 

 \clearpage 



\bibliographystyle{IEEEtran}
\bibliography{main.bib}

\end{document}